\newtheorem{theorem}{theorem}
\newtheorem{definition}[theorem]{Definition}
\newtheorem{observation}[theorem]{Observation}
\newtheorem{defn}[theorem]{Definition}
\newtheorem{lem}[theorem]{Lemma}
\theoremstyle{definition}
\newcommand{\Complex}{\mathbb{C}}
\newcommand{\X}{\mathbf{X}} 
\newcommand{\Y}{\mathbf{Y}} 
\newcommand{\B}{$\textsf{B }$}
\newcommand{\C}{$\textsf{C }$}
\newcommand{\K}{$\textsf{K }$}
\newcommand{\I}{$\textsf{I }$}
\newcommand{\For}{{ \textsf{F}^k}}
\newcommand{\F}{\textsf{F}}             \newcommand{\NnF}{\emph{\texttt{F}}}
\newcommand{\CL}{\textsf{CL} }          \newcommand{\NnCL}{\emph{\texttt{CL}} }
\newcommand{\INT}{\textsf{INT} }        
\newcommand{\G}{\textsf{G} }            \newcommand{\NnG}{\emph{\texttt{G}} }
\newcommand{\BCK}{\textsf{BCK} }        \newcommand{\NnBCK}{\emph{\texttt{BCK}} }
\newcommand{\BCI}{\textsf{BCI} }        
\newcommand{\Int}{\textsf{INT} }        \newcommand{\NnInt}{\emph{\texttt{INT}} }
\newcommand{\SN}{\textsf{SN} }          \newcommand{\NnSN}{\emph{\texttt{SN}} }
\newcommand{\LN}{\textsf{LN} }          \newcommand{\NnLN}{\emph{\texttt{LN}} }
\newcommand{\PEIRCE}{\textsf{PEIRCE} }  \newcommand{\NnPEIRCE}{\emph{\texttt{PEIRCE}} }
\newcommand{\EVEN}{\textsf{EVEN} }      \newcommand{\NnEVEN}{\emph{\texttt{EVEN}} }
\newcommand{\ar}{\hspace*{0.1mm} \Rightarrow }
\newcommand{\impl}{\hspace*{0.1mm} \Rightarrow }
\newcommand{\norm}[1]{\left\Vert #1 \right\Vert}
\newcommand{\liminfty}[1]{\lim_{k \rightarrow \infty} #1}
\newcommand{\limkinfty}[1]{\lim_{k \rightarrow \infty} #1}
\newcommand{\limsupinfty}[1]{\limsup_{n \rightarrow \infty} #1}
\newcommand{\liminfinfty}[1]{\liminf_{n \rightarrow \infty} #1}
\newcommand{\bt}[2]{\begin{#1}\label{#1:#2}}
\newcommand{\et}[1]{\end{#1} }
\newcommand{\Proof}{{\it \noindent Proof. }}
\def\boksik{\hspace*{1mm}\hfill $\Box$}
 \newcommand{\abs}[1]{\left\vert#1\right\vert}
 \newcommand{\set}[1]{\left\{#1\right\}}
\newcommand{\comment}[1]{   }
\def\Newton#1#2{
   \def\ss{\arraystretch} \def\arraystretch{0.8}
   \left(\!\!\begin{array}{c} #1\\ #2\end{array}\!\!\right)
   \def\arraystretch{\ss} }
\newcommand{\subterm}[4]{%
  \fill[draw=gray,fill=gray!20!white,very thick] #1 +(0,#3) --%
  +(#2,-#3) -- +(-#2,-#3) -- cycle; \draw #1+(0,-0.2) node
  [circle] {#4}; }
\newcommand\lmbda[4]{%
  \fill[black] #1 circle (3pt);
  \draw #1 node [circle, anchor=#4] {#3} -- #2;
}
\newcommand\app[3]{%
  \fill[black] #1 circle (3pt);
  \draw #1 -- #2;
  \draw #1 -- #3;
}
\begin{document}
\title{How big is \BCI  fragment of \BCK    logic\thanks{This work was partially supported
by grant number N206 3761 37, Polish Ministry of Science and Higher Education.}}

\author{Katarzyna Grygiel, Pawe{\l} M. Idziak and Marek Zaionc\thanks{Faculty of Mathematics and Computer Science, Theoretical Computer Science, Jagiellonian University, {\L}ojasiewicza 6, 30-348 Krak\'ow, Poland.
Email: {\tt \{grygiel,idziak,zaionc\}@tcs.uj.edu.pl}}}

\date{\today}
\maketitle

 \begin{abstract} We investigate quantitative properties of \BCI and \BCK logics.
The first part of the paper compares the number of formulas provable in \BCI versus \BCK logics.
We  consider formulas built on implication and  a fixed set of $k$ variables.
We investigate the proportion between the number of such formulas of a given length $n$
provable in \BCI logic against the number of formulas of length $n$ provable in richer  \BCK logic. We examine an asymptotic behavior of this fraction when length $n$ of formulas tends to infinity. This limit gives a probability measure that randomly chosen \BCK formula is also provable in \BCI. We prove that this probability tends to zero as the number
of variables tends to infinity. The second part of the paper is devoted to the number of lambda terms representing proofs of \BCI and \BCK logics. We build a proportion between number of such proofs of the same length $n$ and we investigate asymptotic behavior of this proportion when length 
of proofs tends to infinity. We demonstrate that with probability $0$ a randomly chosen \BCK proof is also a proof of a \BCI formula.
\end{abstract}

{\bf Keywords:} \BCK and \BCI logics, asymptotic probability in logic, analytic combinatorics.

\section{Introduction}

The results presented in this paper are a part of research in which the likelihood
of truth is estimated for various propositional logics with a limited number of variables.
Probabilistic methods appear to be very
 powerful in combinatorics and computer science. From a point of view of
 these methods we investigate a typical object chosen
 from some set. For formulas in the fixed propositional language, we investigate the
proportion between the number of valid formulas of a given length
$n$ against the number of all formulas of length $n$.
Our interest lies  in finding the limit of that
fraction when $n$ tends to infinity.
If the limit exists, then it is
represented by a real number which we may call {\it the density} of the investigated logic.
In general, we are also interested
in finding  the {\it `density'} of some other classes of formulas.
Good presentation and overview of asymptotic methods for random
boolean expressions can be found in the paper \cite{gardy-dmtcs} of Gardy.
For the purely implicational logic of one variable (and at the same time
simple type systems),  the exact value of the density of true formulas
was computed by Moczurad, Tyszkiewicz and Zaionc in~\cite{mtz00}.
The classical logic of one
variable and the two connectives of implication and negation was studied in
Zaionc~\cite{zaionc05}; over the same language, the exact proportion between
intuitionistic and classical logics was determined by Kostrzycka and
Zaionc in~\cite{kos-zaionc03}.
Asymptotic  identity between classical and intuitionistic logic of implication has
been proved  in  Fournier, Gardy, Genitrini and Zaionc in~\cite{FGGZ07}.
Some variants involving expressions with other logical
connectives have also been considered.
Genitrini and Kozik have studied the influence of adding the connectors $\lor$ and $\land$
to implication in~\cite{GK}, while
Matecki in~\cite{mat05} considered the case of the single equivalence connector.
For two connectives again, the \emph{ and/or} case has already received much
attention -- see Lefmann and Savick\'{y}~\cite{LS97}, Chauvin, Flajolet, Gardy and
Gittenberger~\cite{CFGG04}, Gardy and Woods~\cite{GW05}, Woods~\cite{w05} and Kozik~\cite{kozik08}.
Let us also mention the survey  \cite{gardy-dmtcs} of Gardy on the probability distributions on Boolean functions induced by random Boolean expressions; this survey deals with the whole set of Boolean functions on some finite number of variables.

\section{\BCK and \BCI logics}

The logics \BCK and \BCI are ones of several pure implication calculi.
Its name comes from the connection with the combinators \B, \C, \K  and
\I (see \cite{Curry_Hindley_Seldin}).
From the perspective of type theory, \BCK and \BCI can be viewed as the set of types of a certain restricted family of lambda terms, via the Curry-Howard isomorphism.
Formally, logics \BCK and \BCI can be defined, each one separately, as Hilbert systems by three axiom schemes and detachment rule. Namely \BCK is based on \B, \C and \K while  \BCI is based on
\B, \C and \I where:

\begin{itemize}
 \item[ (\B)] $(\varphi \impl \psi ) \impl ((\chi \impl \varphi ) \impl ( \chi \impl \psi )$ (prefixing)
 \item[ (\C)] $( \varphi \impl (\psi  \impl \chi)) \impl (\psi \impl (\varphi \impl \chi))$ (commutation)
 \item[ (\K)] $\varphi \impl (\psi \impl \varphi) $
 \item[(\I  )  ] $\varphi \impl \varphi$ (identity)
 \end{itemize}

In \BCK we are able to prove \I therefore
the logic \BCI is a subset of the \BCK.
Let us observe that implicational formulas may be seen as rooted binary trees.

\begin{definition}\label{formula_tree}
By a formula tree we mean the rooted binary tree in which nodes are labeled by $\impl$ and have two successors left and right while leaves of the tree are labeled by variables.
\end{definition}

\begin{definition}\label{formula_tree1}
With every implicational formula  $\varphi$ we associate the formula  tree $G(\varphi)$ in the following way:
\begin{itemize}
  \item If $x$ is a variable, then $G(x)$ is a single node labeled with $x$.
  \item Tree $G( \varphi \impl \psi ) $ is the  tree with the new root labeled with $\impl$ and two subtrees: left $G(\varphi)$ and right $G(\psi)$.
\end{itemize}
\end{definition}

\section{$\lambda$-calculus as a proof system}\label{lambda}

Lambda calculus is a standard mechanism for proof system representation for various propositional calculi. By the Curry-Howard isomorphism there is a one-to-one correspondence between provable formulas in intuitionistic implicational logic and types of
closed lambda calculus terms. Moreover, proofs of formulas correspond to
typable terms. We start with presenting some fundamental concepts of the $\lambda$-calculus, as well as with some new definitions used in this paper.

\begin{definition}
Let $V$ be a countable set of variables. The set $\Lambda$ of $\lambda$-terms is defined by the following grammar:
\begin{enumerate}
\item every variable is a lambda term,
\item if $t$ and $s$ are lambda terms then $ts$ is a lambda term,
\item if $t$ is a  lambda terms and $x$ is a variable then $\lambda x .t$ is a lambda term.
\end{enumerate}
\end{definition}

As usual, $\lambda$-terms are considered modulo the $\alpha$-equivalence, i.e. two terms which differ only by the names of bounded variables are considered equal.
Observe that $\lambda$-terms can be seen as rooted unary-binary trees.

\begin{definition}
By a lambda tree we mean the following rooted graph with two kinds of edges: undirected and directed.
One distinguished node is called the root of the graph.
The graph induced by undirected edges is a rooted tree with the distinguished node being the root of it.
There are two kinds of internal nodes labeled by $@$ and by $\lambda$. Nodes labeled by $@$ have two successors left and right. Nodes labeled with $\lambda$ have only one successor. Leaves of the tree are either labeled by variables or are connected by directed edge with the one of $\lambda$ nodes placed on the path from it to the root.

\end{definition}

\begin{definition}
With every lambda term $t$ we associate the lambda tree $G(t)$ in the following way:
\begin{itemize}
  \item If $x$ is a variable then $G(x)$ is a single node labeled with $x$.
  \item Lambda tree $G( P Q) $ is a lambda tree with the new root labeled with $@$ and connected by two new undirected edges with roots of two lambda subtrees left $G(P)$ and right $G(Q)$.
  \item Tree $G(\lambda x. P)$ is obtained from $G(P)$ in four steps:

   \begin{itemize}
     \item Add new root node labeled with $\lambda$.
     \item Connect new root by undirected edge with the root of $G(P)$.
     \item Connect all leaves of $G(P)$ labeled with $x$ by directed edges with the new root.
     \item Remove all labels $x$ from $G(P)$.
   \end{itemize}

\end{itemize}
\end{definition}

\bigskip
\bigskip

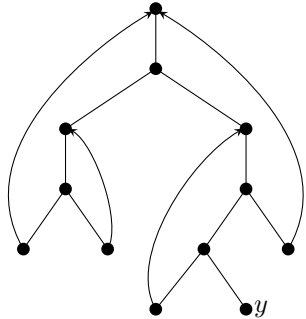
\begin{figure}[h]
  \centering
  \begin{tikzpicture}[>=stealth,scale=.8]
    \lmbda{(0,5)}{(0,4)}{}{east}
    \app{(0,4)}{(-1.5,3)}{(1.5,3)}
    \lmbda{(-1.5,3)}{(-1.5,2)}{}{east}
    \app{(-1.5,2)}{(-2.2,1)}{(-0.8,1)}
    \lmbda{(1.5,3)}{(1.5,2)}{}{east}
    \app{(1.5,2)}{(0.8,1)}{(2.2,1)}
    \app{(0.8,1)}{(0,0)}{(1.5,0)}
    \fill[black] (-2.2,1) circle (3pt)
                 (-0.8,1) circle (3pt)
                 (0,0) circle (3pt)
                 (1.5,0) circle (3pt)
                 (2.2,1) circle (3pt);
   \draw [->] (-2.2,1) .. controls +(120:2cm) and +(-150:1cm) .. (-0.05,4.95);
   \draw [->] (-0.8,1) .. controls +(60:0.5cm) and +(-30:0.5cm) .. (-1.45,3);
   \draw [->] (0,0) .. controls +(120:1cm) and +(-150:1cm) .. (1.45,3);
   \draw [->] (2.2,1) .. controls +(60:2cm) and +(-30:1cm) .. (0.05,4.95);
   \draw (1.75,0) node {$y$};
  \end{tikzpicture}
\caption{\rm The lambda tree representing the term $\lambda z.(\lambda u. zu)((\lambda u. uy)z)$}
\label{fig:headlambdaterm}
\end{figure}


\begin{observation}
 If $T$ is a lambda tree, then $T=G(M)$ for some lambda term $M$.
 Terms  $M$ and $N$ are $\alpha$-equivalent iff  $G(M) = G(N)$. Free variables of term $M$ are the same as variables labeling leafs of $G(M)$.
 \end{observation}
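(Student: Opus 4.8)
The plan is to establish all three assertions by structural induction, treating the directed edges (which encode bound occurrences) and the variable-labelled leaves (which encode free occurrences) as the only carriers of information that $G$ preserves.

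For the first claim I would argue by induction on the number of nodes of the lambda tree $T$, distinguishing the label of its root. If the root is a variable leaf $x$, then $T=G(x)$. If the root is an $@$-node, then no $\lambda$-node lies above it, so every directed edge occurring in either subtree points to a $\lambda$-node lying inside that same subtree; hence both subtrees are themselves lambda trees, the inductive hypothesis yields terms $P$ and $Q$ whose trees are the left and right subtrees, and $T=G(PQ)$. If the root is a $\lambda$-node, I would first pick a variable $x$ not occurring among the labels of $T$, relabel by $x$ every leaf joined to the root by a directed edge, and then delete those directed edges; the resulting graph $T'$ is a genuine lambda tree, since all surviving directed edges point to $\lambda$-nodes strictly below the old root. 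By induction $T'=G(P)$ for some $P$, and then $T=G(\lambda x.P)$. The one point requiring care is exactly this $\lambda$-step: one must check that the decomposition produces a legal lambda tree and that a fresh $x$ is always available, which holds because $V$ is countably infinite.

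For the second claim, the forward implication rests on the observation that the clause for $\lambda x.P$ in the definition of $G$ records a binding only through directed edges and \emph{erases} the name $x$; consequently renaming a bound variable does not alter $G$, and since $\alpha$-equivalence is generated by such renamings, $M\equiv_\alpha N$ implies $G(M)=G(N)$, which I would verify by induction on the term. For the converse I would induct on the common tree $G(M)=G(N)$: the tree fixes the outermost constructor, the variable-labelled leaves force the free parts to agree verbatim, the $@$-case reduces to the two subtrees, and in the $\lambda$-case the two bodies yield the same tree once the binding edges are turned back into a common fresh leaf-label, so the inductive hypothesis makes the bodies $\alpha$-equivalent and hence $M\equiv_\alpha N$.

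The third claim is a direct structural induction on $M$: for a variable the single labelled leaf is the unique free variable; for an application the labelled leaves and the free variables are both the unions of the corresponding sets of the two immediate subterms; and for $\lambda x.P$ the construction strips the label $x$ from precisely the leaves that carried it, so the labelled leaves of $G(\lambda x.P)$ are those of $G(P)$ with $x$ removed, matching $\mathrm{FV}(P)\setminus\{x\}=\mathrm{FV}(\lambda x.P)$. I expect the genuine obstacle throughout to be the bookkeeping at the $\lambda$-nodes: keeping the correspondence between directed edges and bound occurrences exact, and choosing bound-variable names freshly enough that no accidental capture or clash is introduced.
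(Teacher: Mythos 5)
The paper offers no proof of this Observation at all---it is stated as a routine fact---so the only comparison available is against the standard folklore argument, which is essentially what you give, and your proof is correct. The three structural inductions (case analysis on the root label for surjectivity of $G$; erasure of bound names for the forward direction of $\alpha$-invariance plus induction on the common tree for the converse; leaf-label bookkeeping for free variables) all go through, and you correctly identify the two places needing care: freshness of the chosen bound variable, and the treatment of binding edges at $\lambda$-nodes. One point you should state more explicitly: in the $\lambda$-root case of the first claim, $T'$ must be the subtree rooted at the unique child of the old root (after relabelling the bound leaves by the fresh $x$ and deleting the directed edges into the old root). If $T'$ were instead taken to be all of $T$ with the root retained, it would have the same number of nodes as $T$, so your induction on node count would not decrease, and $G(\lambda x.P)$ would acquire a spurious extra $\lambda$-node; your phrase ``strictly below the old root'' suggests you intend the correct reading, but the decomposition should be spelled out. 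Note also that freshness of $x$ is genuinely needed, not just convenient: if $x$ already labelled a free leaf of the subtree, $G(\lambda x.P)$ would attach a directed edge to that leaf as well and so would differ from $T$.
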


We often use (without giving the precise definition) the classical terminology about trees (e.g. path, root, leaf, etc.). A path from the root to a leaf is called a branch.

\comment{
\begin{definition}
The term of the form $(\lambda x.P)Q$ is called  $\beta$ redex.
A lambda term is in normal form if it does not contain $\beta$ redex subterm.
The least relation $\triangleright$ on terms satisfying $ (\lambda x.P)Q \triangleright P[x:=Q]$ is called $\beta$ reduction.
A term $M$ is normalizing if there is reduction sequence starting from $M$ and ending in a normal form $N$. A term $M$ is strongly normalizing if all reduction sequences are finite.  By $SN$ we mean all terms which are strongly normalizing.
\end{definition}

As an example we can see lambda tree representation of redex which is  a subterm of some lambda tree.
Therefore $\beta$ conversion can be seen as a operation on lambda trees.

\begin{figure}[h]
  \centering
  \begin{tikzpicture}[>=stealth,scale=.8]
    \draw[dotted, very thick] (-3,3) -- (-3,2);
    \app{(-3,2)}{(-4,1.5)}{(-2,1.5)}
    \lmbda{(-4,1.5)}{(-4,0.5)} {$\lambda$}{east}
    \subterm{(-4,0)}{1}{.5}{$t$}
    \subterm{(-2,1)}{.5}{.5}{$u$}
    \draw [->] (-4.5,-0.5) .. controls +(-150:2cm) and +(-120:1cm) .. (-4.05,1.45);
    \draw [->] (-3.5,-0.5) .. controls +(-30:2cm) and +(-60:1cm) .. (-3.95,1.45);
  \draw[dotted, very thick] (2,3) -- (2,2);
  \subterm{(2,1.5)}{1}{.5}{$t$}
  \subterm{(1.3,0.5)}{.5}{.5}{$u$}
  \subterm{(2.7,0.5)}{.5}{.5}{$u$}
  \end{tikzpicture}
\caption{\rm $\beta$-reduction scheme}
\label{fig:headlambdaterm}
\end{figure}
}

\subsection{\BCI and \BCK classes of lambda terms}

The Curry-Howard isomorphism for proof representation in
entire  intuitionistic implicational logic can be restricted to weaker logics. Therefore, we can
look at the axioms (\B), (\C), (\I) and (\K) of \BCI and \BCK logics as at types \`{a}
la Curry in the typed lambda calculus. Lambda terms for which these types
are principal (the most general ones) are respectively (\cite{hindley}):

\begin{align*}
{\bf B} & \equiv \lambda x y z. x (y z),\\
{\bf C} & \equiv \lambda x y z. x z y,\\
{\bf I} & \equiv \lambda x . x,\\
{\bf K} & \equiv \lambda x y . x.
\end{align*}

\bt{thm}{I}
The combinator \I is provable in \BCK.
\et{thm}

\Proof
For example the lambda term $(\textbf{C} \textbf{K}) \textbf{K}  = \lambda z.z$ forms proof for the combinator \I in logic \BCK. \boksik

\bt{thm}{subset}
\BCI is a proper subset of \BCK
\et{thm}

\Proof Inclusion follows from Theorem \ref{thm:I}. Combinator \K is not provable in logic \BCI (see
\cite{blok_pigozzi89}, for example). \boksik

\bigskip
We are going to isolate the special set of \BCK provable formulas called simple tautologies which forms
a simple and large fragment of the set of all \BCK provable formulas.
As we will see afterwards the class of simple tautologies is so big that it can play a role of good approximation of the whole set of \BCK tautologies.
Therefore, quantitative investigations about behavior of the whole set can be nicely approximated by this fragment. See \cite{zai06} for  discussion about quantitative aspects of  simple tautologies.

\bt{defn}{simple} A {\em simple tautology\/} is an implicational formula of the form
$\tau_1 \ar ( \dots \ar ( \tau_p \ar \alpha)\ldots )$
such that $p>0$, $ \alpha$ is a variable and there is at least one component $\tau_i$ identical to
$\alpha$.
\et{defn}

\bt{thm}{simple}
Every simple tautology is \BCK provable.
\et{thm}

\Proof
By $(\textbf{CK})\textbf{K}$ we can prove $\alpha \impl \alpha$.
Using several times axiom (\K) we can add any number of premisses
and prove $\tau_1 \ar ( \dots \ar ( \tau_p \ar (\alpha \ar \alpha))\ldots )$. Using axiom
(\C) we are able to permute premisses to get $\tau_1 \ar ( \dots \ar ( \tau_p \ar \alpha)\ldots )$.
\boksik

\bt{defn}{   }
The smallest class of lambda calculus terms containing {\bf B},
{\bf C} and {\bf I} (resp. {\bf B}, {\bf C} and {\bf K}) and
closed under application and $\beta$-reduction is called the class
of \BCI (resp. \textsf{BCK}) lambda terms.
\et{defn}

\bt{thm}{Hindley}
\begin{enumerate}[(1)]
\item A \BCI lambda term is a closed lambda term $P$ such that
\begin{enumerate}[(i)]
\item for each subterm $\lambda x. M$ of $P$, $x$ occurs free in
$M$ exactly once,
\item each free variable of $P$ has just one free occurrence in $P$.
\end{enumerate}
\item A \BCK lambda term is a closed lambda term $P$ such that
\begin{enumerate}[(i)]
\item for each subterm $\lambda x. M$ of $P$, $x$ occurs free in
$M$ at most once,
\item each free variable of $P$ has just one free occurrence in
$P$.
\end{enumerate}
\end{enumerate}
\et{thm}

\Proof Proof can be found in Roger Hindley's book \cite{hindley}.
\boksik

\bigskip

By Theorem \ref{thm:Hindley} we immediately get that \BCI is a proper subclass of \BCK .

\section{Classes of formulas}

\bt{defn}{langage} The language $\For$ over $k$ propositional
variables $\set{a_1 , \ldots , a_k}$ is defined inductively as:

\begin{align*}
a_i & \in \For \quad \text {for all } i \leq k , \\
\phi \impl \psi & \in \For \quad \text{if } \phi \in \For \text{ and } \psi \in \For .
\end{align*}
\et{defn}

We can now define the usual notation for a formula.
Let $T \in \For$ be a formula. Hence it is of the form
$ A_1 \impl  (A_2 \impl  (\ldots \impl  (A_p  \impl   r(T))) \ldots )$;
we shall write it \[T=A_1,\dots,A_p \impl  r(T).\]
The formulas $A_i$ are called the {\it premisses} of $T$ and the rightmost propositional variable
$r(T)$  of the formula is called the {\it goal} of $T$.
For formula $T$ which is itself a propositional variable obviously $p=0$ and $r(T ) = T$.
To prove quantitative results about \BCK and \BCI logics we need to define several other
classes of formulas, all of them being special
kinds of either tautologies or non-tautologies.

\bt{defn}{classes}
We define the following subsets of $\For$:
\begin{itemize}
\item The set of all \emph{classical tautologies,} $\CL^k$  is the set of formulas which are
$true$ under any $\set{0,1}$ valuation.

\item The set of all \emph{intuitionistic tautologies,} $\INT^k$ is the set of formulas
for which there are closed lambda terms (constructive proofs) of type identical with
the formula.

\item The set of all \emph{Peirce formulas,} $\PEIRCE^k$ is the set of classical tautologies
which are not intuitionistic ones.

\item The set $\BCK^k$ is the set of formulas
for which there are closed lambda \BCK terms of type identical with
the formula.

\item The set  $\BCI^k$ is the set of formulas
for which there are closed lambda \BCI terms of type identical with
the formula.

\item The set of \emph{simple tautologies,} $\G^k$ is the set of expressions that can be written as
\[T=A_1,\dots,A_p\impl  r(T),\]
where at least one of $A_i$'s is the variable $r(T)$.
\\

\item The set of even formulas $\EVEN^k$, is the set of formulas in which each variable occurs even number of times.

\item The set of simple non-tautologies $\SN^k$, is the set of formulas of the form
\[T=A_1,\dots,A_p\impl  r(T),\]
where $\ r(A_i)\neq r(T)$ for all $i$.
\\

\item The set $\LN^k$ is the set of less simple non-tautologies,
defined as the set of formulas of the form
\[T=B_1,\dots, B_{i-1}, C, B_i,\dots, B_p\impl  r(T),\]
such that
\[C= C_1, C_2, \dots, C_q\impl  r(C),\]
where $r(C)=r(T)$, $q \geqslant 1$, and
\[C_1= D_1, D_2, \dots, D_r \impl  r(D),\]
where $r(D) \neq r(T)$, $r\geqslant 0$, and
the following holds: for all $j$,
$r(B_j) \not\in \{ r(T), r(D)\}$ and
$r(D_j) \not\in \{r(T), r(D)\}$.
\end{itemize}
\et{defn}

The obvious relations between classes above are the following. 

\bt{lem}{SN_LN_relations}
$\SN^k \cup \LN^k  \subseteq   \F^k \setminus \CL^k$
\et{lem}

\Proof
Suppose $T=A_1,\dots,A_p \impl  r(T)$ is in $\SN^k$.
Then evaluate,  all of assumptions $A_i$ by $1$ and goal $r(T)$ by $0$ we get that $T \notin \CL$.
Now let $T \in  \LN^k$ and $T$ is in the form described by the definition \ref{defn:classes}.
The shape of $T$ allows us to evaluate $r(T)$ and $r(D)$ by $0$ and all the $r(B_j)$ and $r(D_j)$ by $1$
to see that $T \notin \CL$.
\boksik

\bt{lem}{SN cap_LN}
$\SN^k \cap \LN^k =  \emptyset$
\et{lem}

\Proof
Simply by observing the syntactic structure of both sets.
\boksik

\bigskip


\comment{
\begin{eqnarray*}
  \SN^k \cup \LN^k  & \subset    & \F^k \setminus \CL^k  \label{LN_SN_are_nontautologies} \\
  \SN^k \cap \LN^k  & =          & \emptyset \label{LN_SN_disjoined}\\
  \G^k             & \subsetneq & \BCK ^k  \subsetneq INT^k \subsetneq \;\;CL^k \label{logics_inclusions} \subsetneq \;\; \F^k \setminus (SN^k \cup LN^k)\\
  \PEIRCE^k        & =          &  \CL^k \setminus Int^k \label{peirce_definition}\\
  \BCI^k          & \subsetneq & \EVEN^k \cap \BCK^k \label{inclusions_BCI_BCK}
\end{eqnarray*}
}

\begin{figure}[h]
\begin{center}
\psset{xunit=8pt} \psset{yunit=8pt} \pslinewidth=0.6pt
\begin{pspicture}(0,0)(40,33)
\rput(0,0){\cnode*{0.01}{1}} \rput(0,30){\cnode*{0.01}{2}}
\rput(40,30){\cnode*{0.01}{3}} \rput(40,0){\cnode*{0.01}{4}}
\rput(0,10){\cnode*{0.01}{5}} \rput(30,30){\cnode*{0.01}{6}}
\rput(40,10){\cnode*{0.01}{7}} \rput(30,0){\cnode*{0.01}{8}}
\rput(25,0){\cnode*{0.0}{9}} \rput(25,10){\cnode*{0.01}{10}}
\ncline{1}{2} \ncline{2}{3} \ncline{3}{4} \ncline{4}{1} \ncline{5}{7}
\ncline{6}{8} \ncline{9}{10}
\rput(31,11.5){\cnode*{0.01}{11}} \rput(31,28.5){\cnode*{0.01}{12}}
\rput(39,28.5){\cnode*{0.01}{13}} \rput(39,11.5){\cnode*{0.01}{14}}
\ncline{11}{12} \ncline{12}{13} \ncline{13}{14} \ncline{14}{11}
\rput(31.8,13){\cnode*{0.01}{15}} \rput(31.8,13.75){\cnode*{0.01}{15'}}
\rput(31.8,16.25){\cnode*{0.01}{15''}} \rput(31.8,27){\cnode*{0.01}{16}}
\rput(38.2,27){\cnode*{0.01}{17}} \rput(38.2,13){\cnode*{0.01}{18}}
\ncline{15}{15'} \ncline{15''}{16} \ncline{16}{17} \ncline{17}{18} \ncline{18}{15}
\rput(22,7){\cnode*{0.01}{19}} \rput(34,7){\cnode*{0.01}{20}}
\rput(34,16.5){\cnode*{0.01}{21}} \rput(22,16.5){\cnode*{0.01}{22}}
\ncline{19}{20} \ncline{20}{21} \ncline{21}{22} \ncline{22}{19}
\rput(31.25,16.25){\cnode*{0.01}{23}} \rput(31.25,13.75){\cnode*{0.01}{24}}
\rput(33.75,16.25){\cnode*{0.01}{25}} \rput(33.75,13.75){\cnode*{0.01}{26}}
\ncline{23}{24} \ncline{24}{26} \ncline{25}{26} \ncline{25}{23}
\scriptsize
\rput(15,23){$SN^k: \ Simple \ non-tautologies$}
\rput(10,7){$LN^k: Less \ simple \ non-tautologies$}
\rput(27.5,4.5){$Other$}
\rput(27.5,3){$non$-}
\rput(27.5,1.5){$tautologies$}
\rput(32.5,15){$BCI^k$}
\rput(35,5){$Peirce^k$}
\rput(35,29){$Int^k$}
\rput(35,27.75){$BCK^k$}
\rput(35,25){$G^k$}
\rput(35,24){$Simple$}
\rput(35,23){$tautologies$}
\rput(27.5,14){$EVEN^k$}
\normalsize
\rput(15,31.5){$\For \setminus \CL^k : \ Non-tautologies$}
\rput(35,31.5){$\CL^k : \ Tautologies$}
\end{pspicture}
\caption{ Inclusions summarized in Lemmas \ref{lem:SN_LN_relations} till \ref{lem:BCI_EVEN_BCK} }
\end{center}
\end{figure}
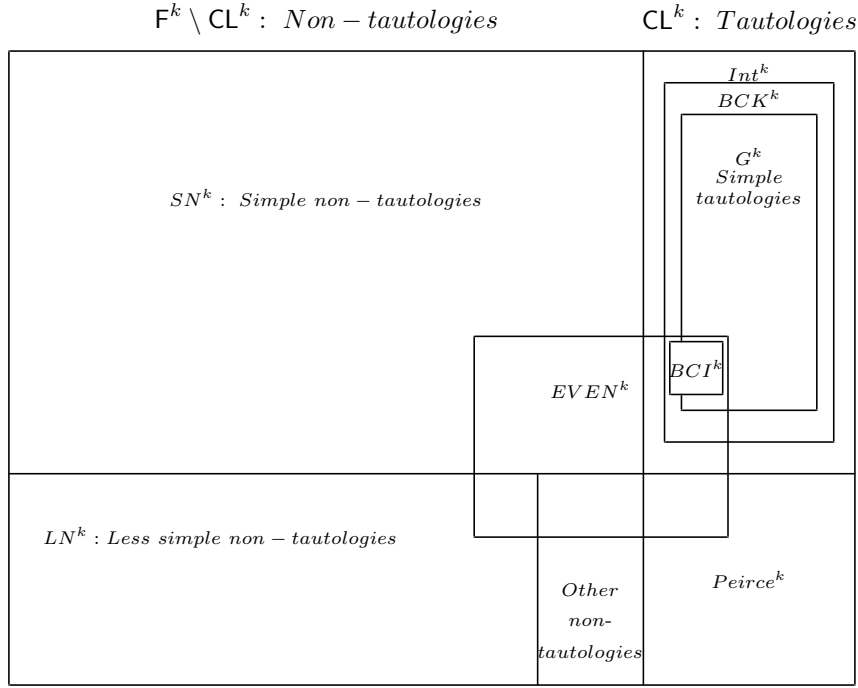

\bigskip

\bt{lem}{G BCK INT CL}
$\G^k \subseteq  \BCK ^k \subseteq INT^k \subsetneq CL^k \subsetneq \F^k \setminus (SN^k \cup LN^k)$
\et{lem}

\Proof
The first inclusion can be found in Theorem \ref{thm:simple}. The rest is trivial.
\boksik

\bt{lem}{BCI_EVEN_BCK}
$\BCI^k \subseteq  \EVEN^k \cap \BCK^k$
\et{lem}

\Proof
$\BCI^k \subsetneq  \EVEN^k$ has been proved recently by Tomasz Kowalski in the paper \cite{kowalski2008} Theorem 6.1. $\BCI^k \subsetneq \BCK^k$ is a classical fact which can be found in \cite{blok_pigozzi89}. See also Theorem \ref{thm:subset}.
\boksik

\section{Densities of sets of formulas}\label{counting}

First we  establish the way in which the size of formula trees are
measured.

\bt{defn}{size} By $ \norm{\phi}$ we mean the size of formula
$\phi$ which we define as total number of leaves in the formula tree $G(\phi)$.
This is in fact the total number of occurrences of propositional variables in the formula.
 Formally,
 \[ \norm{ a_i} =1 \mbox{ and }
 \norm{ \phi \impl \psi } = \norm{\phi} + \norm{\psi} .\]
\end{defn}

\bt{defn}{gestosc} We associate the density $\mu(\X)$ with a
subset $\X \subseteq \For$ of formulas as:

\begin{equation}\label{def}
\mu(\X) = \lim_{n\to\infty} \frac{ \# \{t\in \X: \norm{t}=n \}}{\#
\{t\in \For: \norm{t}=n \}}
\end{equation}

if the limit exists.
 \et{defn}

The number $\mu(\X)$ if it exists is an asymptotic probability of
finding a formula from the class $\X$ among all formulas from
$\For$ or it can be interpreted as the asymptotic density of the
set $\X$ in the set $\For$. It can be immediately seen that the
density $\mu$ is finitely additive so if $\X$ and $\Y$ are
disjoint classes of formulas such that $\mu(\X)$ and $\mu(\Y)$
exist then $\mu(\X \cup \Y)$ also exists and
 $ \mu(\X \cup \Y) = \mu(\X) + \mu(\Y) .$
It is straightforward to observe that for any finite set
$\X$ the density $\mu(\X)$ exists and is $0$. Dually for co-finite
sets $\X$ the density $\mu(\X) = 1$. The density $\mu$ is not
countably additive so in general the formula

\begin{equation} \label{countably_additive}
\mu \left( \bigcup_{i=0}^\infty \X_i \right) =  \sum_{i=0}^\infty
\mu \left( \X_i \right)
\end{equation}

does not hold for all pairwise disjoint classes of sets
$\set{\X_i}_{i\in\mathbb{N}}$. A good counterexample for the
equation (\ref{countably_additive}) is to take as $\X_i$ the singelton
of $i$-th
formula from our language under any natural order of formulas. On
the left hand side of equation \ref{countably_additive} we get
$\mu \left( \For \right)$ which is $1$ but on right hand side
$\mu \left( \X_i \right) = 0$ for all $i\in \mathbb{N}$ and so the
sum is $0$. Finally, we define:

\begin{eqnarray*}
 \mu^- (\X)  &=&  \liminfinfty{\frac{ \#  \{t\in \X: \norm{t}=n \} }{ \# \{t\in \For:
\norm{t}=n \} }}\\
\mu^+ (\X)   &=&  \limsupinfty{ \frac{ \#  \{t\in \X: \norm{t}=n \} }{\#  \{t\in \For:
\norm{t}=n \}}}
\end{eqnarray*}

These two numbers are well defined for any set of formulas  $\X$,
even when the limiting ratio $\mu (\X)$ is not known to exist.

\subsection{Enumerating formulas}\label{counting}

In this section we present some properties of numbers
characterizing the amount of formulas in different classes defined
in our language. Many results and methods
could be rephrased purely in terms of binary trees with given
properties. Obviously an implicational formula from $\For$ of
size $n$ can be seen as a binary tree with $n$ leaves and $k$
labels per leaf (see definitions \ref{formula_tree} and \ref{formula_tree1}).
We will analyze several classes of formulas (trees).

\bt{defn}{F_n^k} By  $\NnF^k_n$ we mean the total number of formulas
from $\For$ of size $n$  so:

\begin{equation}
\NnF^k_n = \# \{ \phi \in \For :\norm{\phi} = n \}.
\end{equation}
 \et{defn}

\bt{lem}{F-numbers} The number $\NnF^k_n = k^n \emph{\texttt{C}}_{n} $ where $\emph{\texttt{C}}_n$ is $(n-1)$th Catalan number.

\comment{
Number $\NnF^k_n$ of all formulas from of the length $n$ is given by the following recursion:
 $\NnF_0^k  = 0, \NnF_1^k =k$ and $F_n^k  =  \sum_{i=1}^{n-1} \NnF_i^k \NnF_{n-i}^k$
}
\et{lem}

\Proof We may use combinatorial observation. A formula from $\For$
of size $n$ can be interpreted as full binary tree of $n$ leaves
with $k$ label per leaf. Therefore for $n=0$ and $n=1$ it is
obvious. Any formula of size $n>1$ is the implication (tree)
between some pair of formulas (trees) of sizes $i$ and $n-i$,
respectively. Therefore the total number of such pairs is
$\sum_{i=1}^{n-1} \NnF_i^k \NnF_{n-i}^k$.
Therefore by simple induction we can immediately
see that $\NnF^k_n = k^n \emph{\texttt{C}}_n $.
For more elaborate treatment of Catalan numbers see Wilf
\cite[pp.~43--44]{Wilf}. We mention only the following well-known
nonrecursive formula for $\emph{\texttt{C}}_n = \frac{1}{n} \Newton{2n-2}{n-1}$.
 \boksik

\bt{lem}{number-of-simple-tautologies} The number $\NnG_n^k$ of
simple tautologies is given by the recursion

\begin{eqnarray}
\NnG_1^k & = & 0, \;\;\;\; \NnG_2^k =  k, \\
\NnG_n^k & = &
\NnF_{n-1}^k - \NnG_{n-1}^k + \sum_{i=2}^{n-1} \NnF_{n-i}^k \NnG_{i}^k.\label{number-of-simple-tautologies}
\end{eqnarray}
\et{lem}

\Proof For the whole discussion about simple tautologies see \cite{zai06}. In particular for a proof look at Lemma 15 of \cite{zai06}.
\boksik

\bt{lem}{number_of_EVEN}
The number $\NnEVEN_n^k$ is given by: \[ \frac{\emph{\texttt{C}}_n}{2^k} \sum_{j=0}^k \Newton{k}{j} (k-2j)^n .\]
\et{lem}

\begin{proof}
The proof is based on the observation obtained in the paper of Franssens (\cite{Fra06} page 30, formula 7.18). In this paper it is obtained an explicit formula
$e_n^k= \frac{1}{2^k} \sum_{j=0}^k \Newton{k}{j} (k-2j)^n$
for the number of closed walks, based at a vertex, of length $n$ along the edges of  $k$-dimensional cube (see also \cite{Stanlay86}). This number in fact appears at Maclaurin series of $\cosh^k (t)$, for all $k$. Note that for odd $n$ this gives $0$.  As we can see the number $e_n^k$ obviously enumerates the set of all sequences of length $n$ of variables $\set{a_1 , \ldots , a_k}$ in which every variable $a_i$ occurs even number of times. Multiplying this by the number $\emph{\texttt{C}}_n$ of all binary trees with $n$ leaves we obtain the explicit formula for $\NnEVEN_n^k$. For some special $k$, sequences $e_n^k$ are mentioned in Sloane's catalogue \cite{sloane}. For instance, $e_{2n}^2 = 2 * 4^{p-2}$ is described in Sloane as A009117.
The sequence $e_{2n}^3 = (3^n+3)/4$ is present in Sloane's catalogue as A054879. Finally $e_{2n}^4$ is Sloane's A092812.
\end{proof}

\subsection{Generating functions}\label{Generating_functions}

 In this paper we investigate the
 proportion between the number of formulas of the size $n$ that are
 tautologies in various logics against the number of all formulas of size $n$ for
 propositional formulas of the language $\For$.
 Our interest lies  in finding limit of that
 fraction when $n$ tends to infinity.
 For this purpose combinatorics has
developed an extremely powerful tool, in the form of generating
series and generating functions.
  A nice exposition of the method
can be found in Wilf \cite{Wilf}, as well as in
in Flajolet, Sedgewick
\cite{fs-book}.
 As the reader may now
expect, while working with formulas we will be often
concerned with complex analysis, analytic functions and their
singularities.

\bigskip

 Let $A= (A_0,A_1,A_2,\dots  )$ be a sequence of real numbers. The
{\em ordinary generating series\/} for $A$ is the formal power
series $\sum_{n=0}^\infty A_nz^n.$
 And, of course, formal power
series are in one-to-one correspondence to sequences. However,
considering $z$ as a complex variable, this series, as known from
the theory of analytic functions, converges uniformly to a
function $f_{A}(z)$ in some open disc $\{z \in \Complex : \abs{z} < R\}$ of
maximal diameter, and $R \geq 0$ is called its radius of
convergence. So with the sequence $A$ we can associate a complex
function $f_A(z),$ called the {\em ordinary generating function\/}
for $A,$ defined in a neighborhood of $0.$
 This correspondence is
one-to-one again (unless $R=0$), since, as it is well known from
the theory of analytic functions, the expansion of a complex
function $f(z),$ analytic in a neighborhood of $z_0,$ into a power
series $\sum_{n=0}^\infty A_n(z-z_0)^n$ is unique.

 Many questions concerning the asymptotic behavior of $A$ can be
efficiently resolved by analyzing the behavior of its generating function $f_{A}$ at the
complex circle $|z|=R.$
 This is the approach we take to determine the asymptotic fraction
of tautologies and many other classes of formulas among all
formulas of a given size.

\bigskip

The main tool used to obtain limits of the fraction of two sequences which are described by
generating functions  will be the following result, due to Szeg\"o
\cite{Sz} [Thm.\ 8.4], see as well Wilf \cite{Wilf} [Thm.\ 5.3.2 page 181].
We can see versions of Szeg\"o lemma in action in papers \cite{zaionc05}, \cite{zai06}
concerning asymptotic probabilities in logic.
The second powerful tool we need is so called Drmota-Lalley-Woods theorem which has been developed independently by Drmota in \cite{Drm97},
Lalley in \cite{Lall93} and Woods in \cite{Woods97} to study problems involving the enumeration of families of plane trees or context-free languages and finding their asymptotic behaviour as solutions of positive algebraic systems. The best presentation of Drmota-Lalley-Woods theorem can be found in
Flajolet and Sedgewick in \cite{fs-book} [pp. 446-451].
Excellent overview of the metod is due to Daniele Gardy in \cite{gardy-dmtcs} [Chapter 4, pp 15-16].

\bigskip

Theorems and lemmas in the next chapter \ref{densities_of_classes} are proved using Szeg\"o lemma. The result mentioned in Theorems \ref{thm:classical} and \ref{thm:BCK/classical} that the limiting ratio $\mu(\CL^k)$ of classical tautologies with $k$ propositional variables exists, requires the use of
Drmota-Lalley-Woods theorem.

\subsection{Densities of classes of formulas}\label{densities_of_classes}

In this section we wish to summarize results
contained in the  papers
\cite{zaionc05} and \cite{FGGZ07}.

\bt{lem}{klasa_G_k}

The asymptotic probability of the fact that a randomly chosen  formula is a simple tautology is:

\begin{equation}\label{limGform}
\mu(\G^k) = \lim_{n\to\infty} \frac{\NnG^k_n}{\NnF^k_n}=\frac{4k+1}{(2k+1)^2}
\nonumber
\end{equation}
\et{lem}

\begin{proof}
The first proof of this fact can be found in \cite{mtz00}. A simpler one is in Theorem 30 of  \cite{zaionc05} page 252.
\end{proof}

\bt{lem}{prop_densite_SN}
The density of simple non-tautologies exists and is equal to
\[ \mu(\SN^k) = \lim_{n\to\infty} \frac{\NnSN^k_n}{\NnF^k_n} = \frac{k(k-1)}{(k+1)^2}.\]
For large $k$, this density is $1-3/k+\Theta(1/k^2).$
\et{lem}

\begin{proof}
This result was already given in the paper \cite[page 586]{mtz00}. The alternative proof can be found in \cite{FGGZ07} at Proposition 7.
\end{proof}

\bt{lem}{prop_densite_LN}
The density of less simple non-tautologies is equal to
\[ \mu(\LN^k) = \lim_{n\to\infty} \frac{\NnLN^k_n}{\NnF^k_n} = \frac{2k(k-1)^2}{(k+2)^4}.\]
For large $k$ it is equal to $2/k + \Theta(1/k^2)$.
\et{lem}

\begin{proof}
The long and complicated proof of this fact can be found in  chapter 4.3 of \cite{FGGZ07}.
\end{proof}

\bt{thm}{classical}
Asymptotically (for a large number $k$ of Boolean variables),
all classical tautologies are simple and it follows that  all intuitionistic tautologies are classical i.e.

\begin{eqnarray*}
 \limkinfty \frac{\mu (\G^k)}{\mu(\CL^k)}   &=&  1,\\
 \limkinfty \frac{\mu^- (\Int^k)}{\mu(\CL^k)} &=&  1.
\end{eqnarray*}
\et{thm}

\begin{proof}
We know that for any $k$, the limiting ratio $\mu(Cl^k)$
of classical tautologies with $k$ propositional
variables exists. This result is obtained by standard
techniques in analysis of algorithms; we skip the details and refer the interested reader
to Flajolet and Sedgewick \cite{fs-book} or to Gardy \cite{gardy-dmtcs}.
From the fact presented in Lemma \ref{lem:G BCK INT CL}

\[\G^k \subseteq \INT^k \subseteq \CL^k \subseteq \F^k \setminus (\SN^k \cup \LN^k)\]

it follows

\[ \mu (G^k) \leq \mu^- (\INT^k ) \leq  \mu (\CL^k) \leq  1 - \mu (\SN^k) - \mu (\LN^k) .
\]
Now our result follows since lower and upper bounds
 $\mu(\G^k)$ and $1 - \mu (\SN^k) - \mu (\LN^k)$ are equal to $1/k + \Theta(1/k^2)$.
\end{proof}

\bt{lem}{Peirces}
The density $\mu(\PEIRCE^k)$ of Peirce formulas (if it exists) is equal to $\frac{1}{2k^2}$.
\[ \liminf_{n\to\infty} \frac{\NnPEIRCE^k_n}{F^k_n} = \limsup_{n\to\infty} \frac{\NnPEIRCE^k_n}{F^k_n} = \frac{1}{2 k^2}.\]
\et{lem}

\begin{proof}
Proof of this fact appears  in the paper \cite{GKM}.
\end{proof}

\bt{thm}{EVAN}
For the set $\EVEN^k$ of formulas we have:
\begin{eqnarray*}
\mu^-(\EVEN^k ) & = & \liminfinfty{\frac{\NnEVEN^k_n}{F^k_n} }= 0 \\
\mu^+ (\EVEN^k) & = & \limsupinfty{ \frac{\NnEVEN^k_n}{F^k_n}} = \frac{1}{2^{k-1}}. \\
\end{eqnarray*}
\et{thm}

\begin{proof}
The first equality is trivial since
$\NnEVEN^k_{n} =
 \frac{\emph{\texttt{C}}_n}{2^k} \sum_{j=0}^k \Newton{k}{j} (k-2j)^n$ is zero  by lemma
\ref{lem:number_of_EVEN} for all odd numbers $n$. Now suppose that $n=2m$.

\begin{eqnarray*}
\limsup_{2m \to \infty} \frac{\NnEVEN^k_{2m}}{F^k_{2m}} & = &
\lim_{2m\to\infty} \frac{\emph{\texttt{C}}_{2m}}{2^k} \sum_{j=0}^k \Newton{k}{j} \frac{(k-2j)^{2m}}{F^k_{2m}} \\
                               &  = & \frac{1}{2^k} \sum_{j=0}^k \lim_{2m\to\infty} \Newton{k}{j} \frac{\emph{\texttt{C}}_{2m} (k-2j)^{2m}}{k^{2m} \emph{\texttt{C}}_{2m}} \\
                               & = & \frac{1}{2^k} \lim_{2m\to\infty} \Newton{k}{0} \frac{\emph{\texttt{C}}_{2m} k^{2m}}{k^{2m} \emph{\texttt{C}}_{2m}} +
\frac{1}{2^k} \lim_{2m\to\infty} \Newton{k}{k} \frac{\emph{\texttt{C}}_{2m} (- k)^{2m}}{k^{2m} \emph{\texttt{C}}_{2m}} \\
                              & = & \frac{1}{2^k} + \frac{1}{2^k} = \frac{1}{2^{k-1}},\\
\end{eqnarray*}

where the second and the third line in the above display are equal as
$$\lim_{2m\to\infty}
\Newton{k}{j} \frac{\emph{\texttt{C}}_{2m} (k-2j)^{2m}}{k^{2m} \emph{\texttt{C}}_{2m}} = 0$$
for every $0<j<k$.
\end{proof}

The picture below summarizes all theorems involving densities
which are needed for proving
our next two results on \BCK and \BCI logics namely
Theorems
\ref{thm:BCK/classical} and \ref{thm:BCI/BCK}.

\begin{figure}[h]
\begin{center}
\psset{xunit=8pt} \psset{yunit=8pt} \pslinewidth=0.6pt
\begin{pspicture}(0,0)(40,33)
\rput(0,0){\cnode*{0.01}{1}} \rput(0,30){\cnode*{0.01}{2}}
\rput(40,30){\cnode*{0.01}{3}} \rput(40,0){\cnode*{0.01}{4}}
\rput(0,10){\cnode*{0.01}{5}} \rput(30,30){\cnode*{0.01}{6}}
\rput(40,10){\cnode*{0.01}{7}} \rput(30,0){\cnode*{0.01}{8}}
\rput(25,0){\cnode*{0.01}{9}} \rput(25,10){\cnode*{0.01}{10}}
\ncline{1}{2} \ncline{2}{3} \ncline{3}{4} \ncline{4}{1} \ncline{5}{7}
\ncline{6}{8} \ncline{9}{10}
\rput(31,11.5){\cnode*{0.01}{11}} \rput(31,28.5){\cnode*{0.01}{12}}
\rput(39,28.5){\cnode*{0.01}{13}} \rput(39,11.5){\cnode*{0.01}{14}}
\ncline{11}{12} \ncline{12}{13} \ncline{13}{14} \ncline{14}{11}
\rput(31.8,13){\cnode*{0.01}{15}} \rput(31.8,13.75){\cnode*{0.01}{15'}}
\rput(31.8,16.25){\cnode*{0.01}{15''}} \rput(31.8,27){\cnode*{0.01}{16}}
\rput(38.2,27){\cnode*{0.01}{17}} \rput(38.2,13){\cnode*{0.01}{18}}
\ncline{15}{15'} \ncline{15''}{16} \ncline{16}{17} \ncline{17}{18} \ncline{18}{15}
\rput(22,7){\cnode*{0.01}{19}} \rput(34,7){\cnode*{0.01}{20}}
\rput(34,16.5){\cnode*{0.01}{21}} \rput(22,16.5){\cnode*{0.01}{22}}
\ncline{19}{20} \ncline{20}{21} \ncline{21}{22} \ncline{22}{19}
\rput(31.25,16.25){\cnode*{0.01}{23}} \rput(31.25,13.75){\cnode*{0.01}{24}}
\rput(33.75,16.25){\cnode*{0.01}{25}} \rput(33.75,13.75){\cnode*{0.01}{26}}
\ncline{23}{24} \ncline{24}{26} \ncline{25}{26} \ncline{25}{23}
\scriptsize
\rput(15,23){$SN^k: \ Simple \ non-tautologies$}
\rput(15,20){$\frac{k(k-1)}{(k+1)^2}=1-\frac3k + \Theta\left(\frac{1}{k^2}\right)$}
\rput(10,7){$LN^k: Less \ simple \ non-tautologies$}
\rput(10,3){$\frac{2k(k-1)^2}{(k+2)^4}=\frac2k + \Theta\left(\frac{1}{k^2}\right)$}
\rput(27.5,4.5){$Other$}
\rput(27.5,3){$non$-}
\rput(27.5,1.5){$tautologies$}
\rput(32.5,15){$BCI^k$}
\rput(35,5){$Peirce^k$}
\rput(35,29){$Int^k$}
\rput(35,27.75){$BCK^k$}
\rput(35,25){$G^k$}
\rput(35,24){$Simple$}
\rput(35,23){$tautologies$}
\rput(35,21){$\frac{4k+1}{(2k+1)^2}$}
\rput(35,19){$=\frac1k + \Theta\left( \frac{1}{k^2}\right)$}
\rput(27.5,14){$EVEN^k$}
\rput(27,12){$\frac{1}{2^{k-1}}$}
\normalsize
\rput(15,31.5){$\mathcal{F}^k \setminus Cl^k : \ Non-tautologies$}
\rput(35,31.5){$Cl^k : \ Tautologies$}
\end{pspicture}
\caption{Densities of the sets of formulas}
\end{center}
\end{figure}
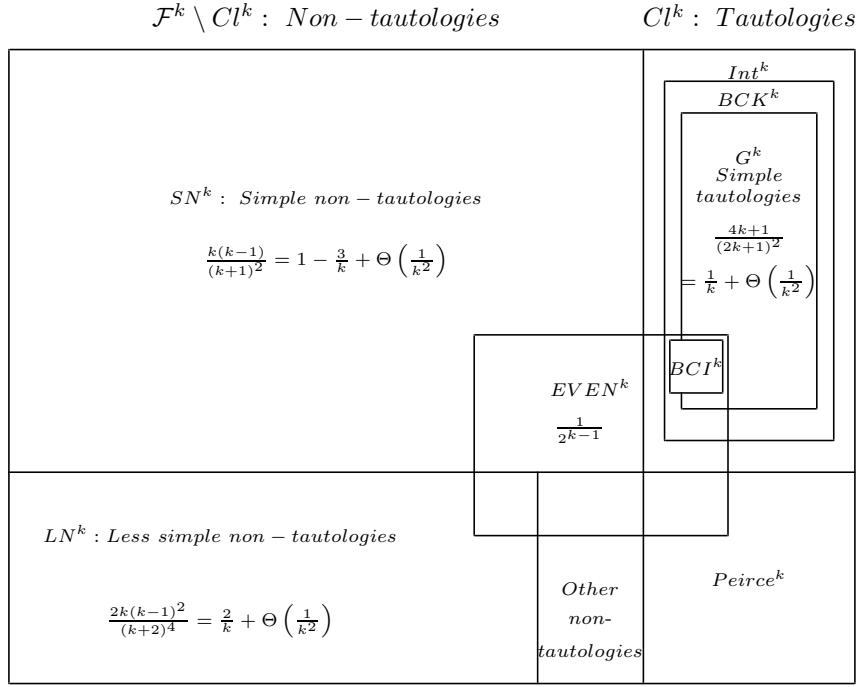

\bt{thm}{BCK/classical}
Almost every classical tautology is \BCK provable i.e.
$$ \limkinfty \frac{\mu^- (\BCK^k)}{\mu(\CL^k)} = 1.$$
\et{thm}

\begin{proof}
From the fact that $\G^k \subseteq  \BCK^k \subseteq  \Int^k \subseteq \CL^k$, we have
\[ \mu (G^k) =
\limkinfty \frac{\NnG^k_n}{\NnF^k_n} \leqslant
\liminfinfty \frac{\NnBCK^k_n}{\NnF^k_n}
\leqslant \limsupinfty\frac{\NnInt^k_n}{\NnF^k_n}
\leqslant \liminfty \frac{\NnCL^k_n}{\NnF^k_n} = \mu (\CL^k).
\]
The result follows from the fact that
both $\mu(\G^k)$ and $\mu(\CL^k)$ are equal to $1/k + \Theta(1/k^2)$.
\end{proof}

\bt{thm}{BCI/BCK}
Almost non \BCK provable formula is \BCI provable  i.e.
$$ \limkinfty \frac{\mu^+ (\BCI^k)}{\mu^- (\BCK^k)} = 0.$$
\et{thm}

\begin{proof}
It follows from two facts. Lemma \ref{lem:BCI_EVEN_BCK} shows  $\BCI^k \subseteq \EVEN^k$  while from Lemma  \ref{lem:G BCK INT CL} based on Theorem \ref{thm:simple}
we get $\G^k \subset \BCK^k$.
The rest is based on the calculations from Theorem \ref{thm:EVAN} and Lemma \ref{lem:klasa_G_k}.

 Therefore:
\[ \mu^+ (\BCI^k) \leq \mu^+ (\EVEN^k) \leq \frac{1}{2^{k-1}},\]
while
\[
\mu^- (\BCK^k) \geq \mu (\G^k ) = \frac{4k+1}{(2k+1)^2} .
\]
Finally we have

\[ \limkinfty \frac{\mu^+ (\BCI^k)}{\mu^- (\BCK^k)} \leq \limkinfty \frac{(2k+1)^2 }{2^{k-1} (4k+1) } =  0.\]

\end{proof}

\textbf{Interpretation of result:} Weakening rule $\varphi \impl (\psi \impl \varphi) $
is much stronger tool to generate formulas  then identity rule $\varphi \impl \varphi$.

\bigskip

\textbf{Open problems:} We do not know if there exist densities
$\mu (\BCI^k)$ or $\mu (\BCK^k)$
 of two investigated logics


\section{Counting proofs in \BCI and \BCK logics}

In this section we will focus on two special classes \BCI and \BCK of lambda
terms. On the basis of their special structure,
we will show how to enumerate \BCI and \BCK terms of a given size. As
the main result the density of \BCI terms among \BCK terms is computed.


\bt{defn}{   }
The size of a lambda term is defined in the following way:
\begin{align*}
\| x\| &= 1\\
\| \lambda x.M \| &= 1 + \| M\| \\
\| MN \| &= 1+ \| M\| + \| N\|.
\end{align*}
\et{defn}
As we can see $\|t \| $ is the number of all  nodes of lambda tree $G(t)$.

\begin{defn}
Let $n$ be an integer. We denote by $\Lambda_n$ the set of all closed lambda terms
up to $\alpha$ conversion
of size $n$. Obviously the set $\Lambda_n$ is finite. We denote its cardinality by $L_n$.
\end{defn}

As far as we know, no asymptotic analysis of the sequence $L_n$ has been done. Moreover, typical combinatorial techniques do not seem to apply easily for this task.
For the first time the problem of enumerating lambda terms was
considered in \cite{wang}. In general, counting lambda
terms of a given size turns out to be a non-trivial and challenging task. The wide
discussion on this problem and some results concerning properties
of random terms can be found in \cite{properties}.

\subsection{Enumerating \BCI terms}

\bt{defn}{ }
By $a_n$ we denote the number of \BCI terms of size $n$.
\et{defn}

Since in \BCI terms each lambda binds exactly one variable, in a lambda
tree for such a term the number of leaves is equal to
the number of unary nodes. In every unary-binary tree the number
of leaves is greater by one than the number of binary nodes. Thus
the number of \BCI terms is positive only if the size is equal to $3k+2$ ($k$
binary nodes, $k+1$ unary nodes and $k+1$ leaves) for $k \in
{\mathbb N}$.

\bt{defn}{ }
By $a^*_n$ we denote the number of \BCI terms up to $\alpha$ conversion
with $n$ binary nodes.
\et{defn}

Obviously, $a^*_n=a_{3n+2}$. Moreover $a_n =0$ for $n \neq 2$ mod $3$. 

\begin{lem}
The sequence $(a^*_n)$ satisfies the recurrence:
\begin{align*}
&a^*_0=1, \qquad a^*_1=5,\\
&a^*_n=6na^*_{n-1}+\sum_{i=1}^{n-2}a^*_i a^*_{n-i-1} \quad , \text{ for } n \geq 2.
\end{align*}
\end{lem}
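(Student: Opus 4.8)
The plan is to use the characterization of Theorem \ref{thm:Hindley} to treat a \BCI term counted by $a^*_n$ as a closed \emph{linear} lambda term with $n$ binary ($@$) nodes, $n+1$ unary ($\lambda$) nodes and $n+1$ leaves, in which every $\lambda$ binds exactly one leaf and no variable occurs free. I would prove the recurrence by decomposing such a term according to its root, which, the term being closed and hence non-atomic, is either a $\lambda$-node or an application node. If the root is an application $MN$, then by linearity the free variables of $M$ and of $N$ are disjoint and their union is empty, so $M$ and $N$ are themselves closed \BCI terms; letting $i$ be the number of binary nodes of $M$ gives the contribution $\sum_{i=0}^{n-1} a^*_i a^*_{n-1-i}$. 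Isolating (for $n\ge 2$) the two extreme terms $i=0$ and $i=n-1$, which both use the unique closed term $\textbf{I}=\lambda x.x$ with $a^*_0=1$, this equals $2a^*_{n-1}+\sum_{i=1}^{n-2} a^*_i a^*_{n-1-i}$.

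If the root is a $\lambda$-node $\lambda x.M$, then $M$ is a linear term with $x$ as its unique free variable and $n$ binary nodes; write $c_n$ for the number of such terms (the name of the single free variable being irrelevant up to $\alpha$). Combining the two root cases gives $a^*_n = c_n + 2a^*_{n-1} + \sum_{i=1}^{n-2} a^*_i a^*_{n-1-i}$, so the entire statement reduces to the single identity $c_n=(6n-2)\,a^*_{n-1}$.

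The heart of the argument is therefore a bijection establishing $c_n=(6n-2)\,a^*_{n-1}$. Given a linear term $M$ with one free variable $z$ and $n$ binary nodes, I would first observe that the parent of the unique free leaf $z$ must be an application node: if it were a $\lambda$-node, that binder would bind no occurrence in its body (which is just $z$), contradicting condition (i) of Theorem \ref{thm:Hindley}. Contracting this application node, i.e.\ deleting $z$ and the node and promoting the sibling subtree, yields a closed \BCI term $t'$ with $n-1$ binary nodes, since no binder and no bound leaf is disturbed. Conversely, from any closed $t'$, which has $3n-1$ nodes, one reconstructs $M$ by choosing one of its $3n-1$ subtrees (equivalently, one of its $3n-1$ nodes as a subtree root) and one of the $2$ sides on which to attach a fresh leaf $z$ through a new application node. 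These two maps are mutually inverse, whence $c_n = 2(3n-1)\,a^*_{n-1} = (6n-2)\,a^*_{n-1}$.

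Finally I would verify the base cases by inspection: $a^*_0=1$ (only $\lambda x.x$) and $a^*_1=5$, namely $\lambda xy.xy$, $\lambda xy.yx$, $\lambda x.x(\lambda y.y)$, $\lambda x.(\lambda y.y)x$, and $(\lambda x.x)(\lambda y.y)$. The main obstacle is the bijection of the previous paragraph: one must argue carefully that the free leaf always hangs directly below an application node, so that the contraction is always defined; that the contracted $t'$ is genuinely closed and linear; and that distinct (subtree, side) choices yield distinct terms, so that the inverse is well defined and the multiplicity is exactly $2(3n-1)=6n-2$. Everything else is bookkeeping. The same decomposition can be phrased through the bivariate generating function $F(z,u)=\sum_{n,m} b_{n,m}\, z^n u^m/m!$, where $b_{n,m}$ counts linear terms with $n$ binary nodes and $m$ labelled free variables, which satisfies $F=u+\partial_u F+zF^2$; here $c_n=b_{n,1}$ and $a^*_n=b_{n,0}$, and the crucial identity again amounts to $b_{n,1}=(6n-2)\,a^*_{n-1}$.
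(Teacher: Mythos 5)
Your proof is correct and takes essentially the same approach as the paper's: decompose a term with $n\ge 2$ binary nodes by its root, where the application case yields the convolution $\sum_{i=0}^{n-1}a^*_ia^*_{n-i-1}$ and the abstraction case reduces to inserting the single bound leaf into a closed term with $n-1$ binary nodes in $2(3n-1)$ ways, with the same base cases $a^*_0=1$, $a^*_1=5$. Your explicit contraction/insertion bijection (and the observation that the free leaf's parent must be an application node) merely makes rigorous what the paper states informally as counting insertion positions, so the two arguments coincide.
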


\begin{proof}
There is only one \BCI term of size $2$ (no binary nodes): $\lambda x.x$. Moreover there
are five terms of size $5$ (one binary node): $\lambda xy.xy$, $\lambda xy.yx$, $(\lambda x.x)(\lambda x.x)$, $\lambda x.(\lambda y.y)x$ and $\lambda x.x(\lambda y.y)$. Thus, $a^*_0=1$ and $a^*_1=5$.

Let $P$ be a \BCI term with $n \geq 2$ binary nodes. Such
a term is either in the form of application or in the form of
abstraction. Both cases are depicted in Figure 4. 

In the first case $P$ is an application of two \BCI terms, $P
\equiv MN$, where $M$ has $i$ binary nodes and $N$ has $n-i-1$
binary nodes ($i=0, \ldots , n-1$). It gives us $\sum_{i=0}^{n-1}
a^*_i a^*_{n-i-1}$ possibilities.

In the second case $P$ is in the form of abstraction, $P \equiv
\lambda x. M$, and $x$ occurs free in $M$ exactly once. In the tree
corresponding to $M$, the parent of the leaf labeled with $x$
must be a binary node. Thus, we can look at the tree corresponding
to $M$ as at the lambda tree for some \BCI term $Q$ with $n-1$ binary
nodes with an additional leaf labeled with $x$. 
This leaf can be inserted
into the tree in two manners, either on the left or on the right.
Moreover this can be done in $3n-1$ ways which is the number of all
 branches in the tree for $Q$. 
Thus, there are $2(3n-1)a^*_{n-1}$ possibilities of such insertions.

\begin{figure}[h]
\begin{center}
\psset{xunit=25pt} \psset{yunit=25pt} \pslinewidth=0.5pt
\begin{pspicture}(1,0)(13.5,4)
\rput(1,0){\cnode*{0.01}{1}} \rput(3,0){\cnode*{0.01}{2}}
\rput(2,2){\cnode*{0.01}{3}} \rput(4,0){\cnode*{0.01}{4}}
\rput(6,0){\cnode*{0.01}{5}} \rput(5,2){\cnode*{0.01}{6}}
\rput(3.5,3){\cnode*{0.03}{7}}
\ncline{1}{2} \ncline{2}{3} \ncline{3}{1} \ncline{4}{5} \ncline{5}{6} \ncline{6}{4}
\ncline[linestyle=dotted]{3}{7} \ncline[linestyle=dotted]{6}{7}
\rput(3.5,3.25){$@$}
\rput(2,.5){$a^*_i$}
\rput(5,.5){$a^*_{n-i-1}$}
\rput(8,0){\cnode*{0.01}{11}} \rput(10,0){\cnode*{0.01}{12}}
\rput(9,2){\cnode*{0.01}{13}} \rput(9,3.5){\cnode*{0.03}{14}}
\ncline{11}{12} \ncline{12}{13} \ncline{13}{11}
\ncline[linestyle=dotted]{13}{14}
\rput(9,.5){$a^*_{n-1}$} \rput(9,3.75){$\lambda x$}
\rput(10.5,1){$+$} \rput(12,0.8){or}
\rput(11.7,1.5){$@$} \rput(12.3,1.5){$@$} \rput(11.2,0.3){$x$} \rput(12.8,0.3){$x$}
\rput(11.7,1.3){\cnode*{0.03}{15}} \rput(11.2,0.5){\cnode*{0.03}{16}}
\rput(12.3,1.3){\cnode*{0.03}{17}} \rput(12.8,0.5){\cnode*{0.03}{18}}
\ncline{15}{16} \ncline{17}{18}
\end{pspicture}
\caption{Two ways of obtaining a \BCI term with $n \geq 2$ binary nodes}
\end{center}
\end{figure}
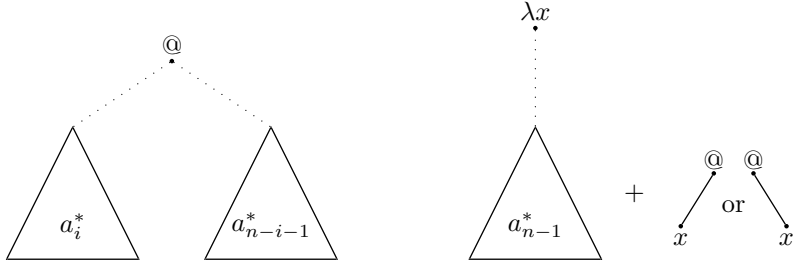

Summing up we get the equation of the lemma. 

\end{proof}

Denoting by $A(x)$ the generating function for the sequence $(a^*_n)$ and after basic calculations, we get
$$6x^2 \frac{\partial A(x)}{\partial x} + xA^2(x) + (4x-1)A(x) + 1 =0, \qquad A(0)=1.$$
This is a non-linear Riccati differential equation and as such it has a solution which is a non-elementary function.

The sequence $(a^*_n)$ and the function $A(x)$ were studied in \cite{Janson}. On the basis of that paper we get the asymptotics
$$a_{3n+2} = a^*_n \sim \frac{1}{2 \pi}6^n(n-1)!.$$

First values of $(a_n)$ are the following:
$$0, 0, 1, 0, 0, 5, 0, 0, 60, 0, 0, 1105, 0, 0, 27120, 0, 0, 828250, 0, 0, 30220800,  \ldots$$
This sequence can be found in the On-Line Encyclopedia of Integer Sequences (\cite{sloane})
under the number  A062980.

\subsection{Enumerating \BCK terms}

\bt{defn}{ }
Let us denote by $b_n$ the number of \BCK lambda terms of size $n$.
\et{defn}

\begin{lem}
The sequence $(b_n)$ satisfies the following recursive equation:
\begin{align*}
&b_0=b_1=0, \qquad b_2=1, \qquad b_3=2, \qquad b_4=3,\\
&b_n = b_{n-1} + 2\sum_{i=0}^{n-3} i b_i + \sum_{i=0}^{n-1} b_i b_{n-i-1} +1 \quad \text{ for } n\geq 5.
\end{align*}
\end{lem}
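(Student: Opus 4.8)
The plan is to classify a closed \BCK term by the label at the root of its lambda tree, exactly as in the \BCI count, but now keeping careful track of the \emph{vacuous} abstractions that \BCK, unlike \BCI, permits. Throughout I invoke Theorem \ref{thm:Hindley}: a \BCK term of size $n$ is precisely a closed lambda tree with $n$ nodes in which every $\lambda$-node binds at most one leaf. Since a closed term of size $\geq 2$ is never a single variable, its root is either an application $@$ or an abstraction $\lambda$, and I split $b_n$ according to these possibilities.

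First I would dispose of the two easy cases. If the root is $@$, then $P\equiv MN$, and since no binder sits above the root, every free variable of $M$ and of $N$ is free in $P$; as $P$ is closed, both $M$ and $N$ must themselves be closed \BCK terms, and conversely any application of two closed \BCK terms is again closed and affine, hence \BCK. Splitting by $\|M\|=i$ this contributes $\sum_{i=0}^{n-1} b_i\,b_{n-i-1}$ (the boundary summands vanish as $b_0=b_1=0$). If the root is an abstraction $\lambda x$ with $x$ \emph{not} occurring in its body $M$, then $M$ is itself a closed \BCK term of size $n-1$, contributing $b_{n-1}$.

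The heart of the argument is the remaining case, $P\equiv\lambda x.M$ with $x$ occurring exactly once, which I claim numbers $2\sum_{i=0}^{n-3} i\,b_i+1$. Here I follow the leaf labelled $x$ upward and let $\lambda y_1\cdots\lambda y_m$ ($m\geq 0$) be the maximal chain of abstraction nodes sitting immediately above it; each $y_j$ is vacuous, since below $\lambda y_j$ the chain consists only of abstraction nodes ending in the single leaf $x\neq y_j$. Two sub-cases arise. Either the chain runs all the way to the root, forcing $M=\lambda y_1\cdots\lambda y_{n-2}.x$ and yielding the single ``all-$\lambda$'' term $\lambda x.\lambda y_1\cdots\lambda y_{n-2}.x$ — this is the $+1$. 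Otherwise the chain abuts an application node $@$ whose other child is a subtree $S$; deleting the $x$-leaf, the $m$ chain abstractions and this $@$ (letting $S$ take its place) produces a closed \BCK term $Q$ of size $i=n-3-m$, and conversely $M$ is recovered from the data $(Q,\,S,\,\text{side})$ by re-inserting a fresh $@$-node at the position of $S$ in $Q$ and attaching the chain $\lambda y_1\cdots\lambda y_m.x$ as its left or right child. Since $S$ may be any of the $i$ subtrees (nodes) of $Q$ and may go on either side, this is a bijection giving $2\,i\,b_i$ terms for each chain length $m=n-3-i\geq0$; summing over $i$ gives $2\sum_{i=0}^{n-3} i\,b_i$. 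Adding the three cases yields the stated recurrence, and I would finish by checking $b_0=b_1=0,\ b_2=1,\ b_3=2,\ b_4=3$ directly (for $n\leq 4$ no closed affine term can contain an application node, so only terms built entirely from abstractions survive).

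The delicate point — and the reason the \BCK recurrence differs from the clean single term $2(n-3)b_{n-3}$ of the \BCI count — is exactly this variable-length chain of vacuous abstractions: ``at most once'' lets arbitrarily many empty binders pile up above the bound leaf, so the size $i$ of the extracted closed term $Q$ ranges over all values up to $n-3$ instead of being pinned at $n-3$. I expect the main obstacle to be arguing cleanly that the chain-deletion map is a genuine bijection: one must verify that the sibling subtree $S$ lies outside the scope of every $y_j$ (it hangs \emph{above} the chain, so it cannot mention any $y_j$) and contains no occurrence of $x$ (there is only one), so that $S$ is a bona fide subtree of the closed term $Q$ and the reconstruction is unambiguous up to $\alpha$-equivalence.
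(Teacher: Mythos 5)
Your proposal is correct and takes essentially the same approach as the paper: the identical three-way classification at the root (application, abstraction binding one occurrence, vacuous abstraction), with the binding case handled by isolating the maximal chain of vacuous lambdas above the bound leaf, yielding the $+1$ for the all-lambda term and $2\,i\,b_i$ insertion positions per closed term $Q$ of size $i$. Your chain-deletion map is simply the inverse description of the paper's chain-insertion construction, spelled out with somewhat more care about bijectivity.
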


\begin{proof}
There are no terms of size $0$ and $1$, there is only one \BCK term of size $2$: $\lambda x.x$, two terms of size $3$: $\lambda xy.x$ and $\lambda xy.y$, and three terms of size $4$: $\lambda xyz.x$, $\lambda xyz.y$ and $\lambda xyz.z$. Thus $b_0=b_1=0$, $b_2=1$, $b_3=2$ and $b_4=3$.

Let $P$ be a \BCK term of size $n \geq 5$. Such a term is either in the form of application or in the form of
abstraction where the first lambda binds one variable or in the form of abstraction where the first lambda does not bind any variables.

In the first case $P$ is in the form of application, $P \equiv MN$, where $M$ and $N$ are \BCK terms of size, respectively, $i$ and $n-i-1$ ($i=0,\ldots,n-1$). It gives us $\sum_{i=0}^{n-1} b_i b_{n-i-1}$ possibilities.

In the second case $P$ is in the form of abstraction, $P \equiv \lambda x.M$ and $x$ occurs free in $M$ exactly once.
There are two subcases here: either $M\equiv \lambda x_1 \ldots x_{n-2} . x$ or $M$ is a term built of a \BCK term $Q$ of size $i=2,\ldots,n-3$ with an additional term $\lambda x_1 \ldots x_{n-i-3}. x$ inserted on one of its branches or on the branch joining $\lambda x$ with $Q$. Since in $Q$ there are $i-1$ branches and the additional term can be inserted either on the left or on the right, this case gives us $1+2\sum_{i=2}^{n-3} i b_i$ possibilities. Both subcases are presented in Figure 5.

\begin{figure}[h]
\begin{center}
\psset{xunit=25pt} \psset{yunit=25pt} \pslinewidth=0.5pt
\begin{pspicture}(0,0)(11,4)
\rput(0,0){\cnode*{0.03}{1}} \rput(0,0.75){\cnode*{0.03}{2}}
\rput(0,2.25){\cnode*{0.03}{3}} \rput(0,3){\cnode*{0.03}{4}}
\ncline{1}{2} \ncline[linestyle=dotted]{2}{3} \ncline{3}{4}
\rput[Bl](0.3,0){$x$} \rput[l](0.3,0.75){$\lambda x_{n-2}$} \rput[l](0.3,2.25){$\lambda x_{1}$} \rput[l](0.3,3){$\lambda x$}
\rput(4,0){\cnode*{0.01}{11}} \rput(6,0){\cnode*{0.01}{12}}
\rput(5,2){\cnode*{0.01}{13}} \rput(5,3.5){\cnode*{0.03}{14}}
\ncline{11}{12} \ncline{12}{13} \ncline{13}{11}
\ncline[linestyle=dotted]{13}{14}
\rput(5,.5){$b_i$} \rput(5,3.75){$\lambda x$}
\rput(6.5,1){$+$} \rput(9.5,0.8){or}
\scriptsize
\rput(8.8,1.9){$@$} \rput[r](8.2,1.3){$\lambda x_1$} \rput[r](8.2,0.6){$\lambda x_{n-i-3}$} \rput[r](8.2,0.1){$x$}
\rput(8.8,1.7){\cnode*{0.03}{15}} \rput(8.4,1.3){\cnode*{0.03}{16}}
\rput(8.4,0.6){\cnode*{0.03}{17}} \rput(8.4,0.1){\cnode*{0.03}{18}}
\ncline{15}{16} \ncline[linestyle=dotted]{16}{17} \ncline{17}{18}
\rput(10.2,1.9){$@$} \rput[l](10.8,1.3){$\lambda x_1$} \rput[l](10.8,0.6){$\lambda x_{n-i-3}$} \rput[l](10.8,0.1){$x$}
\rput(10.2,1.7){\cnode*{0.03}{19}} \rput(10.6,1.3){\cnode*{0.03}{20}}
\rput(10.6,0.6){\cnode*{0.03}{21}} \rput(10.6,0.1){\cnode*{0.03}{22}}
\ncline{19}{20} \ncline[linestyle=dotted]{20}{21} \ncline{21}{22}
\end{pspicture}
\caption{Second case of the construction of a \BCK term of size $n \geq 5$}
\end{center}
\end{figure}
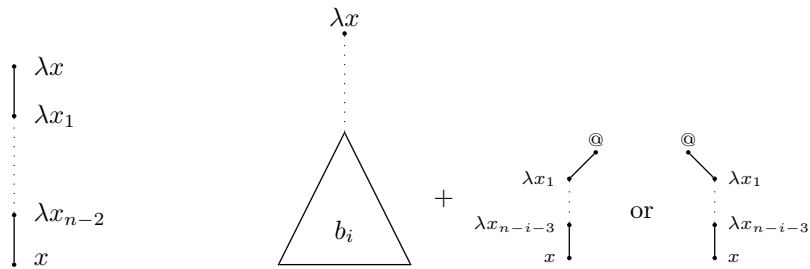

In the third case $P$ is in the form of abstraction, $P \equiv \lambda x.M$ and $x$ does not occur free in $M$. The number of such terms of size $n$ is equal to the number of all \BCI terms of size $n-1$. Thus, it gives us $b_{n-1}$ possibilities.

Summing up we get the equations of the lemma. 
\end{proof}

Denoting by $B(x)$ the generating function for the sequence $(b_n)$ and after basic calculations, we get

$$2x^4 \frac{\partial B(x)}{\partial x} + (x-x^2) B^2(x) - (1-x)^2 B(x) + x^2= 0, \qquad B(0)=0.$$

Again we obtained a non-linear Riccati differential equation. Unfortunately, this time we do not know the solution.

First values of $(b_n)$ are the following:

$$0, 0, 1, 2, 3, 9, 30, 81, 225, 702, 2187, 6561, 19602, 59049, 177633, 532170, 1594323, \ldots$$

Also this sequence can be found in \cite{sloane} under the number A073950.

\subsection{Density of \BCI in \BCK}
Our main goal is to compute the density of \BCI in \BCK . Since there are no \BCI terms of size $n \not\equiv 2 \mod 3$ (which is not true in the case of \BCK terms), if the density exists, then it is $0$.

Let us observe that each \BCK term can be obtained from a \BCI term with some additional (possibly none) lambdas. This observation allows us to obtain a formula for $b_n$ depending on the sequence $(a_n)$.

\begin{lem}\label{combrep}
The number of possible ways of choosing $k$ out of the $n$ elements with repetition is equal to ${n+k-1 \choose n-1}$.
\end{lem}

\begin{lem}\label{ogr}
For $k \in {\mathbb N}$ the following formula holds
$$b_{3k+2} = \sum_{i=0}^{k} {3k \choose 3i} a_{3i+2}.$$
\end{lem}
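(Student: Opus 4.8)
The plan is to set up a bijection between \BCK terms of size $3k+2$ and pairs consisting of a \BCI term together with a placement of ``vacuous'' lambda abstractions, and then to count those placements with Lemma \ref{combrep}. Call a lambda node of a \BCK term \emph{vacuous} if it binds no variable occurrence. By the characterization in Theorem \ref{thm:Hindley}, a closed term is a \BCI term exactly when it has no vacuous lambda, while a \BCK term is precisely a closed term in which every lambda binds at most one occurrence. Hence, given a \BCK term $P$, deleting every vacuous lambda node (splicing its unique child onto its parent) yields a term $Q$ in which every lambda binds exactly one occurrence and which is still closed, i.e. a \BCI term. I would then verify that this stripping operation is the two-sided inverse of inserting vacuous lambdas, so that each $P$ is recovered uniquely from its \BCI core $Q$ together with the record of how many vacuous lambdas were removed from each position of $Q$.

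Next I would do the size bookkeeping. If the core $Q$ has $i$ binary nodes then, as already noted for \BCI terms, $Q$ has $i+1$ unary nodes, $i+1$ leaves, and size $3i+2$; since each inserted vacuous lambda is a single unary node and therefore adds $1$ to the size, the number of vacuous lambdas needed to reach size $3k+2$ is exactly $(3k+2)-(3i+2)=3(k-i)$, which forces $0\le i\le k$. The number of admissible cores with $i$ binary nodes is $a^*_i=a_{3i+2}$, so the lemma reduces to showing that the number of ways to distribute $3(k-i)$ vacuous lambdas over the positions of $Q$ equals $\binom{3k}{3i}$.

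The heart of the argument, and the step I expect to be the main obstacle, is the correct counting of insertion positions. Because several vacuous lambdas stacked on one edge form an indistinguishable chain, only the multiplicity at each position matters, so the count is a combination with repetition: distributing $3(k-i)$ identical lambdas among the $3i+1$ edges of the tree of $Q$. Applying Lemma \ref{combrep} with $n=3i+1$ positions and $3(k-i)$ repetitions gives $\binom{(3i+1)+3(k-i)-1}{(3i+1)-1}=\binom{3k}{3i}$, and summing the product $a_{3i+2}\binom{3k}{3i}$ over $i=0,\dots,k$ then yields the claimed formula for $b_{3k+2}$. The delicate point is to pin down precisely which positions are available, in particular whether a vacuous lambda may sit above the root, and to confirm that the stripping/insertion correspondence is a genuine bijection with no double counting; an off-by-one in the number of positions would propagate directly into the binomial coefficient, so this is exactly where the proof must be made airtight.
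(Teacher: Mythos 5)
Your instinct about the ``delicate point'' is exactly right, and unfortunately it is fatal to the count as you perform it. The slot above the root \emph{must} be allowed: if $Q$ is a closed \BCK term, then $\lambda z.Q$ with $z$ binding nothing is again a closed \BCK term, and stripping returns $Q$. A \BCK term whose root node is a vacuous lambda can arise in no other way, so restricting insertions to the $3i+1$ edges of the core, as you do, makes the insertion map fail to be surjective. The correct number of positions is $3i+2$ (the $3i+1$ edges plus the root slot), and Lemma \ref{combrep} then gives
\[
\binom{(3i+2)+3(k-i)-1}{(3i+2)-1}=\binom{3k+1}{3i+1}
\]
placements, not $\binom{3k}{3i}$.

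This is not a repairable bookkeeping slip within the stated formula: the identity itself is false as printed. For $k=1$ the paper's own recursion for $(b_n)$ (and direct enumeration) gives $b_5=9$, whereas $\sum_{i=0}^{1}\binom{3}{3i}a_{3i+2}=a_2+a_5=1+5=6$. Concretely, the three missing terms $\lambda u\lambda v\lambda w\lambda x.x$, $\lambda u\lambda v\lambda x\lambda w.x$ and $\lambda u\lambda x\lambda v\lambda w.x$ all carry a vacuous lambda at the root; together with $\lambda x\lambda u\lambda v\lambda w.x$ they realize the $\binom{2+3-1}{2-1}=4$ distributions of three vacuous lambdas over the \emph{two} positions of the core $\lambda x.x$, while your count allows only one. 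With the corrected coefficient one gets $\binom{4}{1}a_2+\binom{4}{4}a_5=4+5=9$, as required. So your bijective plan is sound --- it is essentially the paper's own argument, which commits the identical off-by-one by counting ``$3i+1$ branches'' --- but carried out correctly it proves the amended statement $b_{3k+2}=\sum_{i=0}^{k}\binom{3k+1}{3i+1}a_{3i+2}$, not the lemma as stated; to make your write-up airtight you must include the root slot and correct the binomial coefficient accordingly.
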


\begin{proof}
Each \BCK term of size $3k+2$ can be obtained from a \BCI term of size $3i+2$ ($i=0,\ldots,k$) by inserting $3k+2-3i-2=3k-3i$ additional lambdas. There are $3i+1$ branches in the tree for a \BCI term of size $3i+2$, thus the number of insertions corresponds to the number of $3k-3i$-element combinations with repetition of a $3i+1$-element set and thus, by Lemma \ref{combrep}, it is equal to ${3k \choose 3k-3i} = {3k \choose 3i}$.
\end{proof}

\bt{thm}{  }
The density of \BCI terms among \BCK terms equals 0.
\et{thm}

\begin{proof}
By the asymptotics of $(a_{3k+2})$ and by Lemma \ref{ogr}, we get
\[ \frac{a_{3k+2}}{b_{3k+2}} \leq \frac{a_{3k+2}}{a_{3k+2} + {3k \choose 3} a_{3k-1}} \overset{k\to \infty}{\longrightarrow} 0. \qedhere \]
\end{proof}

\end{document}